\DeclareSymbolFont{calsymbols}{OMS}{cmsy}{m}{n} \DeclareSymbolFontAlphabet{\mathcal}{calsymbols}
\def\E{\mathop\mathbb{E}\nolimits}
\def\G{\mathop\mathbb{G}\nolimits}
\begin{document}

\newtheorem{theorem}{Theorem} \newtheorem{lemma}{Lemma}
\newtheorem{definition}{Definition} \newtheorem{corollary}{Corollary}
\newtheorem{proposition}{Proposition}
\newtheorem{example}{Example}

% paper title
\title{Energy-efficient Scheduling of Delay Constrained Traffic over Fading Channels}
%\thanks{The work of Juyul Lee is supported by a Motorola Partnership in Research Grant.}\footnote{Initial results were presented at IEEE ISIT 2008.}}

% author names and affiliations
\author{\authorblockN{Juyul Lee and Nihar Jindal}\\
\authorblockA{Department of Electrical and Computer Engineering\\
University of Minnesota\\
%Minneapolis, MN 55455, USA\\
E-mail: \{juyul,nihar\}@umn.edu}}

% make the title area
\maketitle

%\vspace{-15mm}

\begin{abstract}

A delay-constrained scheduling problem for point-to-point
communication is considered: a packet of $B$ bits must be
transmitted by a hard deadline of $T$ slots over a time-varying
channel. The transmitter/scheduler %, which has perfect knowledge of
%the instantaneous channel quality, 
must determine how many bits to
transmit, or equivalently how much energy to transmit with, during
each time slot based on the current channel quality and the number
of unserved bits, with the objective of minimizing expected total
energy. In order to focus on the fundamental scheduling problem, it
is assumed that no other packets are scheduled during this time
period and no outage is allowed. Assuming transmission at capacity
of the underlying Gaussian noise channel, a closed-form expression
for the optimal scheduling policy is obtained for the case $T=2$ via
dynamic programming; for $T>2$, the optimal policy can only be
numerically determined. Thus, the focus of the work is on derivation
of simple, near-optimal policies based on intuition from the $T=2$
solution and the structure of the general problem. The proposed
bit-allocation policies consist of a linear combination of a
delay-associated term and an opportunistic (channel-aware) term. In
addition, a variation of the problem in which the entire packet must
be transmitted in a single slot is studied, and a channel-threshold
policy is shown to be optimal.
\end{abstract}

%\begin{keywords}
%Delay effects, dynamic programming, fading, scheduling.
%\end{keywords}

%%%%%%%%%%%%%%%%%%%%%%%%%%%%%%%%%%%%%%%%%%%%%%%%%%%%%%%%%%%%%%%%%%%%%
\section{Introduction} \label{sec-intro}
%%%%%%%%%%%%%%%%%%%%%%%%%%%%%%%%%%%%%%%%%%%%%%%%%%%%%%%%%%%%%%%%%%%%%

A time-varying channel is a fundamental feature of wireless
communication.  In this context, opportunistic scheduling refers to
the idea of transmitting with more power/higher rate when the
channel quality is good and less power/lower rate when the channel
is in a poor state. While this strategy is efficient from the
perspective of long-term average rate, it is not necessarily
appropriate for delay-constrained traffic which requires guaranteed
short-term performance.

In this paper we consider the problem of transmitting a packet of
$B$ bits over $T$ time slots, where the channel fades independently
 from slot to slot and the transmitter has perfect \textit{causal}
 channel information (i.e., knowledge of the current channel, but not of the future channel).
During each slot, the transmitter (or scheduler hereafter)
determines how many bits to transmit based on the current channel
quality and the number of bits yet to be served. The scheduler must
balance the desire to be opportunistic,
  i.e., wait to serve many of the bits when the channel is in a good state,
  with the hard deadline.  We investigate the setting where there is
  a single packet to be transmitted (i.e., no other packets are
  scheduled during the $T$ slot delay horizon),  the packet must be transmitted by the
  deadline, and transmission occurs at
  capacity of the underlying Gaussian noise channel.  In this framework our objective is to design
  a scheduling policy that minimize the expected energy consumed.
This setup reasonably models delay-constrained applications such as
VoIP, where packets arrive regularly and each must be received
within a short delay window.  In such a setting perhaps the most
important design objective is to minimize the resources (in our
case, energy) needed to meet the delay requirements.  In the
cellular uplink, for example, an energy-minimizing policy would
extend the battery life of mobile terminals.

\subsection{Prior Work}
Delay constrained scheduling in wireless communication systems has
been actively studied in various network settings under different
traffic models and delay constraints
(see for example
\cite{Berry_IT02}\cite{Rajan_IT04}\cite{Collins_Allerton99}\cite{Prabhakar_INFOCOM01}\cite{UysalBiyikoglu_IT04}\cite{Neely_IT07}\cite{Chen_INFOCOM07}\cite{Chen_WiOpt07}
and references therein). In \cite{Berry_IT02}\cite{Rajan_IT04}\cite{Collins_Allerton99},
power/rate control policies that minimize \textit{average delay} are
studied for a fading channel with random packet arrivals. In
\cite{Prabhakar_INFOCOM01}\cite{UysalBiyikoglu_IT04}\cite{Neely_IT07}\cite{Chen_INFOCOM07}\cite{Chen_WiOpt07}
systems with random packet arrivals, hard delay constraints, and
general energy-rate relationships are studied, but the emphasis is
on ``offline'' algorithms in which the scheduler has non-causal
knowledge of the packet arrivals and the channel states; heuristic
variations of the optimal ``offline'' algorithms are also proposed for
the more challenging ``online'' (i.e., \textit{causal}) setting.

In this paper, we rather focus on the interplay between fading, hard
deadlines, and causal channel information by studying transmission
of only a single packet, and thus do not consider random arrivals.
Not only is this model more tractable, but it also more reasonably
models applications with deterministic packet arrivals, e.g., VoIP
or video streaming.  To emphasize our treatment of physical-layer
issues, we use the terms \emph{causal} and \emph{non-causal} rather
than online and offline to indicate whether the scheduler has
knowledge of future channel states. Recently, Fu et
al.~\cite{Fu_WC06} considered this problem (single packet
transmission over a block fading channel, subject to a hard
deadline) and formulated it as a finite-horizon dynamic program
(DP).  For general energy-bit functions this DP can only be solved
numerically, but in \cite{Fu_WC06} a closed-form description of the
optimal policy is derived for the special case where the energy-bit
relationship is linear and the channel state is restricted to be an
integer multiple of some constant.  In this work we specialize the
framework of \cite{Fu_WC06} to the case where the energy-bit
relationship is governed by the AWGN channel capacity formula, and
derive closed-form descriptions of the optimal policy for $T=2$ and
sub-optimal policies for $T > 2$. In \cite{Zafer_WITA07} the work of
\cite{Fu_WC06} is extended to a setting where the channel evolves
according to a continuous Markov process, and the optimal scheduler
is derived for the case where the energy-bit relationship is given
by the AWGN capacity formula under particular assumptions on the
channel model (channels with \textit{drift}). However, these results
do not apply to the block fading model considered here and the
policies are rather different in structure from those developed
here.

In an earlier work, Negi and Cioffi \cite{Negi_IT02} studied the
dual problem of maximizing the expected number of transmitted bits
in a finite number of slots subject to a finite energy constraint
(with the energy-bit relationship described by the AWGN capacity
formula).  The optimal policy can generally only be found by
numerical methods (although a threshold policy is found to be
optimal at low SNR), and thus the solutions give little insight into
how the scheduling parameters (e.g., channel state, number of bits
to serve, number of slots remaining toward the deadline, and the
like) affect the scheduling process. Although we deal primarily with
suboptimal scheduling policies, we are able to
 deduce the effect of these parameters on the optimal policy.

\subsection{Summary of Contribution}
In this paper, we develop low-complexity and near-optimal scheduling
policies for delay-constrained causal scheduling. Our main result is
the following scheduler: a time-dependent weighted sum of a delay
associated term and an opportunistic term as
\begin{equation} \label{eq-framework}
b_t = \underbrace{\frac{1}{t} \beta_t}_{\text{delay associated}} + \underbrace{\frac{t-1}{t} \log \frac{g_t}{\eta_t}}_{\text{opportunistic}},
\end{equation}
where $b_t$ is the number of bits to serve (from the remaining
$\beta_t$ bits) at time slot $t$ ($t$ is in descending order and
thus represents the number of remaining slots), $g_t$ denotes the
current channel state, and $\eta_t$ denotes a channel threshold
determined by the channel statistics and the particular policy. If
the current channel quality is equal to the threshold level, then a
fraction $\frac{1}{t}$ of the remaining
 bits are transmitted. If the channel quality is better/worse than the threshold, then additional/fewer bits are transmitted. The scheduler acts very
 opportunistically when the deadline is far away ($t$ large) but less so as the deadline approaches. The motivation of
 this form was raised from the simple $T=2$ case, for which this form is shown to be optimal.
 
Two different suboptimal policies in the form of
\eqref{eq-framework} are proposed, one through a simple extension of
the optimal $T=2$ scheduler and the other by solving a relaxed
version of the optimization.  Numerical results are presented to
illustrate that these policies provide a significant advantage over
a naive equal-bit policy, and that they perform quite close to the
optimal for moderate/large values of $B$. In addition, we consider
the case of one-shot allocation where the entire packet must be
transmitted in only one of the slots.  This is an optimal stopping
problem, from which it follows that a simple channel threshold
policy is optimal.

This paper is organized as follows. Section II describes the problem formulation.
Section III discusses the optimal scheduler and Section IV develops suboptimal schedulers and their general
framework that gives an insight on the algorithm structure that reveals the incorporation of the delay constraint
on the scheduling process. Section V provides analysis and simulations. Section VI considers the one-shot allocation problem. We conclude in Section VII.

\emph{Notations:} The operation $\E[X]$ for a random variable $X$ denotes the expected value.
The operation $\G[X]$ for a random variable $X$ denotes $e^{\E[\ln X]}$ and the function $\G(x_1,\cdots,x_m)$ for deterministic quantities $x_1,\cdots,x_m$ denotes the geometric mean $(\prod_{i=1}^m x_i)^{1/m}$. The operation $\langle \cdot \rangle_x^y$ denotes truncation from below at $x$ and truncation from above at $y$. The function $1_{\{\cdot\}}$ denotes the indicator function, i.e., its value is 1 if the argument is true and 0 otherwise. The sets $\mathbb{R}_+$ and $\mathbb{R}_{++}$ denote the set of non-negative numbers and the set of positive numbers, respectively.

%%%%%%%%%%%%%%%%%%%%%%%%%%%%%%%%%%%%%%%%%%%%%%%%%%%%%%%%%%%%%%%%%%%%%
    \section{Problem Formulation}
%%%%%%%%%%%%%%%%%%%%%%%%%%%%%%%%%%%%%%%%%%%%%%%%%%%%%%%%%%%%%%%%%%%%%

We consider a single-user delay constrained scheduling problem as
illustrated in Fig.~\ref{fig:scheduling_simple}: a packet of $B$
bits must be transmitted within $T$ time slots through a fading
channel, in which $T$ is referred to as the \emph{delay-limit} or
\emph{deadline}.  We assume no other packet is scheduled during the
$T$ time slots, and that the packet must be transmitted by the
deadline (i.e., no outage is allowed).  Although these two
assumptions may not be entirely realistic, even for relatively
deterministic traffic (e.g., in VoIP, the next packet generally
arrives before the deadline of the previous has expired;
furthermore, a small percentage of packets are allowed to miss their
deadlines), these set of assumptions allow for a relatively
tractable problem and allow us to focus on the central issue of
meeting deadlines based upon causal channel information. The purpose
of the scheduler is to determine the energy, or equivalently the
number of bits, to be served during each time slot such that the
expected energy is minimized and the bits are served by the deadline
$T$. % (no outage is allowed).
\begin{figure}
\centering
\includegraphics[width=0.63\textwidth]{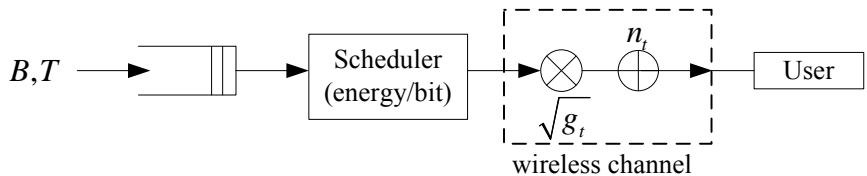}
\caption{Single-user delay constrained scheduling}
\label{fig:scheduling_simple}
\end{figure}

Time is indexed in descending order, i.e., $t=T$ is the initial
slot, $t=T-1$ is the 2nd slot, $\ldots$, and $t=1$ is the final slot
before the deadline; in doing so, $t$ represents the number of
remaining slots. The channel state, in power units, is denoted by
$g_t$. We assume that the channel states $\{g_t\}_{t=1}^T$ are
independently and identically distributed (i.i.d.) and the scheduler
has causal knowledge of these channel states (i.e., at time $t$,
$g_T,g_{T-1}\cdots,g_t$ are known but $g_{t-1},\cdots,g_1$ are
unknown). In this context, we refer to this type of scheduler as a
\emph{causal scheduler}.  The channel state $g$ is assumed to be a
non-degenerate positive continuous random variable.

Assuming unit variance Gaussian additive noise and transmission at
capacity, the number of transmitted bits, denoted as $b_t$, if $E_t$
energy is used is given by $b_t=\log_2(1+g_t E_t)$. By solving for
$E_t$ we arrive at a formula for the energy cost in terms of the
channel state $g_t$, and the number of bits\footnote{An implicit
assumption is that each slot spans $n$ channel symbols, for $n$
reasonably large, and that powerful coding allows for transmission
of $n b_t$ bits in the $t$-th slot. Thus, the quantity $b_t$ should
be thought of as the number of bits transmitted per channel symbol
during the $t$-th scheduling slot.} served $b_t$:
\begin{equation}
E_t (b_t,g_t) = \frac{2^{b_t}-1}{g_t}.
\end{equation}

We use $\beta_t$ to denote the queue state; i.e., the remaining bits
at time slot $t$. Then, $\beta_t$ can be calculated recursively as
$\beta_{t}=\beta_{t+1}-b_{t+1}$. Given this setup, a scheduler is a
sequence of functions $\{b_t\}_{t=1}^T$ that maps from the remaining
bits and the current channel state\footnote{Because the channel
states are assumed to be i.i.d., it is sufficient to make scheduling
decisions based only on the current channel (while ignoring past
channels).  If channels are correlated across time slots, then the
past and present channel should be used to compute the conditional
distributed of future channel states and all expected future energy
costs should be computed with respect to these conditional
distributions.} to the number of bits served, i.e.,
$b_t:\mathbb{R}_+ \times \mathbb{R}_{++}\to [0,\beta_t]$. Then, the
optimal energy-efficient scheduler is the set of scheduling
functions $\{b_t^\text{opt}(\cdot,\cdot)\}_{t=1}^T$ that minimizes
the total expected energy cost (summed over the $T$ slots): i.e.,
\begin{equation} \label{eq:E_sum_Et}
\min_{b_T, \cdots, b_1} \E\left[\sum_{t=1}^T E_t(b_t,g_t)\right]
\end{equation}
subject to $\sum_{t=1}^T b_t=B$ and $b_t\ge 0$ for all $t$.

The optimization in \eqref{eq:E_sum_Et} can be formulated
sequentially (via dynamic programming) with the remaining bits
$\beta_t$ as a state variable that summarizes the bit allocation up
until the previous time step.
\begin{equation} \label{eq:seq_Et}
b_t^\text{opt}(\beta_t, g_t)=\begin{cases}
\arg\min\limits_{0\le b_t\le \beta_t} \left\{E_t(b_t,g_t) + \E\left[\sum_{s=1}^{t-1} E_s (b_s,g_s)\Bigg\vert b_t\right]\right\},&t=T,\ldots,2,\\
\beta_1,& t=1.
\end{cases}
\end{equation}
This is the standard backward iteration: we first determine the
optimal action at $t=1$, then find the optimal policy at $t=2$ by
taking into account the optimal policy to be used at $t=1$, and so
forth.
%At the last time slot $t=1$, $b_1^{\rm opt}(\beta_1,g_1)=\beta_1$ for all $g_1$ trivially because of the hard delay constraint %(no optimization is required).
Since $g_t$ is known but future channel states $g_{t-1},\ldots, g_1$ are unknown, the quantity $E_t$ is not random but the future energy costs $E_{t-1},\ldots,E_1$ are random.
Note also that the
optimization \eqref{eq:seq_Et} should be performed for all possible values of $\beta_t$ and $g_t$. In other
words, deriving the optimal scheduling function $b_t^\text{opt}$ is equivalent to finding the optimal
decision rule for all possible pairs $(\beta_t, g_t)$.

%%%%%%%%%%%%%%%%%%%%%%%%%%%%%%%%%%%%%%%%%%%%%%%%%%%%%%%%%%%%%%%%%%%%%
\section{Optimal Scheduling}
\label{sec:opt_causal}
%%%%%%%%%%%%%%%%%%%%%%%%%%%%%%%%%%%%%%%%%%%%%%%%%%%%%%%%%%%%%%%%%%%%%

In this section we attempt to derive the optimal (causal) scheduler
using the conventional dynamic programming technique
\cite{Bertsekas_DP1_Book05}. Unfortunately, an analytic expression
is obtained only when $T=2$ (besides the $T=1$ trivial case). For
$T>2$, we discuss the difficulty in obtaining an analytic
expression. When the scheduler has non-causal knowledge of the
future channel states, however, deriving an optimal scheduler is
possible; the optimal non-causal scheduler provides useful intuition
and is derived in Appendix \ref{sec:wf}.

\subsection{Optimal Scheduler for $T=2$} \label{sec:optimal_T2}
In the final time slot ($t=1$), the scheduler is required to transmit
all $\beta_1$ unserved bits regardless of the channel state $g_1$,
due to the hard delay constraint. Thus, the energy cost is given by
$E_1(\beta_1, g_1)=(2^{\beta_1}-1)/g_1$ for all $g_1$, and the
expected cost to serve $\beta_1$ bits in the final slot is
$\E_{g_1}\left[E_1(\beta_1,g_1)\right]=\E\left[\frac{1}{g}\right](2^{\beta_1}-1)$.

At $t=2$, $g_2$ is known but $g_1$ is unknown. The scheduler needs
to determine $b_2$, based on $g_2$ and $B$, while balancing the
current energy cost (of serving $b_2$ bits in the current slot) and
the \emph{expected} future cost (of deferring $B-b_2$ bits to the
last slot).  Thus, the optimum scheduler is the solution to the
following minimization:
\begin{eqnarray}
b_2^\text{opt}(B, g_2) &=& \arg
\min_{0\le b_2 \le B} \left(\underbrace{\frac{2^{b_2}-1}{g_2}}_{\text{current power cost}}+ \underbrace{\E_{ g_1}\left[E_1(B-b_2,g_1)\right]}_{\text{expected future cost}}
\right)\label{eq:J2_causal_T2}\\
&=& \arg\min_{0\le b_2\le B} \left( \frac{1}{g_2}\left(2^{b_2}-1\right)+\E\left[\frac{1}{ g_1}\right]\left(2^{B-b_2}-1\right)\right)\nonumber.
\end{eqnarray}
The objective function in \eqref{eq:J2_causal_T2} is convex, and
therefore the minimizer is found by setting the derivative to zero
while taking into account the constraints on $b_2$:
\begin{equation} \label{eq:b2_opt_T2}
b_2^\text{opt}(B, g_2) = \left\langle \frac{1}{2}B +
\frac{1}{2}\log_2 \left(g_2 \nu_1\right)\right\rangle_0^B,
\end{equation}
where $\nu_1 \triangleq \E\left[1/g\right]$ is a constant that
depends only on the distribution of the channel state $g$ (see
Appendix \ref{sec:nu} for the definition of constants $\nu_m$ for
$m=1, 2, \ldots$). Note that this policy depends only on the
unserved bits and the current channel state. This policy is only
meaningful when  $\nu_1$ is finite; this rules out Rayleigh fading,
in which case $g$ is exponentially distributed and thus
$\E\left[1/g\right]$ is not finite.

Notice that the optimal scheduling function \eqref{eq:b2_opt_T2} has two additive terms: (a) $\frac{1}{2}B$
corresponds to an equal distribution to time slots $t=1$ and $t=2$, and (b)
$\frac{1}{2}\log_2\left(g_2\nu_1\right)$ associated with a measure of the channel quality at
$t=2$. That is, if the channel quality $g_2$ is bigger than a threshold $1/\nu_1$, then more bits are
allocated than $\frac{1}{2}B$; if $g_t$ is smaller than the threshold then fewer bits are allocated and more bits are deferred to the final slot.

\subsection{Optimal Scheduler for $T>2$}

From \eqref{eq:seq_Et}, the optimization that the scheduler solves at each time step is:
\begin{equation} \label{eq:Jt_opt}
J_t^\text{opt}(\beta_t,g_t) =
\begin{cases}
\min\limits_{0\le b_t\le \beta_t} \left(\frac{2^{b_t}-1}{g_t}+\bar{J}_{t-1}^\text{opt}(\beta_t-b_t)\right), & t\ge 2\\
E_1(\beta_1,g_1), & t=1,
\end{cases}
\end{equation}
where
$\bar{J}_{t-1}^\text{opt}(\beta)=\E_{g}[J_{t-1}^\text{opt}(\beta,g)]$
denotes the \emph{cost-to-go} function, which is the expected cost
to serve $\beta$ bits in $(t-1)$ slots if the optimal control policy
is used at each step. This is a one-dimensional convex optimization
(pp.~87-88 in \cite{Boyd_Book04}) over $b_t$ and the optimal
solution satisfies
\begin{equation} \label{eq:bt_opt_det}
 b_t^\text{opt} (\beta, g_t)= \begin{cases}
  0, & g_t \le \frac{\ln 2}{(\bar{J}_{t-1}^\text{opt})'(\beta)},\\
  \arg_b \left\{\frac{2^{b}}{g_t} = \frac{1}{\ln 2}(\bar{J}_{t-1}^\text{opt})'(\beta-b)\right\}, & \frac{\ln 2}{(\bar{J}_{t-1}^\text{opt})'(\beta)} < g_t < \frac{2^{\beta}\ln 2}{(\bar{J}_{t-1}^\text{opt})'(0)}, \\
  \beta, & g_t \ge \frac{2^{\beta}\ln 2}{(\bar{J}_{t-1}^\text{opt})'(0)},
 \end{cases}
\end{equation}
assuming $\bar{J}_{t-1}^\text{opt}$ is differentiable (pp.~254-255
in \cite{Rockafellar_Book70}), where $\arg_b\{\cdot\}$ represents
the solution\footnote{Because of the convexity, the solution exists
uniquely if it exists.} of the argument equation.

When $t=2$, the cost-to-go function
$\bar{J}_1^\text{opt}(\beta)=(2^{\beta}-1)\nu_1$ (as well as its
derivative) takes on a very simple form  and thus
\eqref{eq:bt_opt_det} can be solved in closed form as in
\eqref{eq:b2_opt_T2}. However, the same is not true for $t>2$.
Because the optimal policy for $t=2$ is known, the cost-go-to
$\bar{J}_2^\text{opt}(\beta)$ can be written in closed form. The
derivative $(\bar{J}_2^\text{opt})'(\beta)$ can also be written in
closed form but cannot be analytically inverted; thus, the optimal
policy for $t=3$ can only be written in the form of
\eqref{eq:bt_opt_det} with the second condition given by the
following fixed point equation:
\begin{equation}
\frac{2^{b_3}}{g_3} = 2^{\beta-b_3}\int_0^{\frac{2^{-(\beta-b_3)}}{\nu_1}} \nu_1 dF(x) + 2^{\frac{\beta-b_3}{2}} \nu_1^{\frac{1}{2}} \int_{\frac{2^{-(\beta-b_3)}}{\nu_1}}^{\frac{2^{\beta-b_3}}{\nu_1}} \left(\frac{1}{x}\right)^{\frac{1}{2}}dF(x) + 2^{\beta-b_3}\int_{\frac{2^{\beta-b_3}}{\nu_1}}^\infty \frac{1}{x} dF(x),
\end{equation}
where $F$ is the cumulative distribution function of the channel
state $g$. As a result, no analytical characterization of
$\bar{J}_3^\text{opt}(\beta)$ is possible, and thus neither
$b_t^\text{opt}(\cdot,\cdot)$ nor $\bar{J}_t^\text{opt}(\beta)$ can
be found in closed form for $t\ge 4$.

Alternately, we can numerically find the optimal scheduler by the discretization method \cite{Bertsekas_AC75}. However, large complexity and memory is required for sufficiently fine discretization. More importantly, this numerical method gives little insight on how the delay constraint and channel state affect the scheduling function.

%%%%%%%%%%%%%%%%%%%%%%%%%%%%%%%%%%%%%%%%%%%%%%%%%%%%%%%%%%%%%%%%%%%%%
\section{Suboptimal Scheduling Policies}
%%%%%%%%%%%%%%%%%%%%%%%%%%%%%%%%%%%%%%%%%%%%%%%%%%%%%%%%%%%%%%%%%%%%%

Because the optimal scheduler cannot be written in closed form, it
is of interest to develop suboptimal schedulers. The first scheduler
is based on the intuition from the optimal $T=2$ policy, and the
second is found by solving a relaxed version of the optimization.

\subsection{Suboptimal I Scheduler}

If we compare the optimal causal scheduler for $T=2$ (Section \ref{sec:optimal_T2}) to the non-causal scheduler, we can immediately notice that the optimal scheduler determines $b_2^\text{opt}$ by inverse-waterfilling over channels $g_2$ and $1/\nu_1$, where the non-causal scheduler inverse waterfills over $g_2$ and the actual value of $g_1$\footnote{When both $g_2$ and $g_1$ are known at $t=2$, the optimal non-causal scheduling policy is given by $b_2^\text{IWF}(B,g_2)=\left\langle\frac{1}{2}B+\frac{1}{2}\log_2\left(\frac{g_2}{g_1}\right)\right\rangle_0^B$ from \eqref{eq:bt_wf_seq}, in which ``IWF'' stands for \emph{inverse waterfilling} (see Appendix \ref{sec:wf} for detail).}. This is because of the particularly simple form of the expected future cost. Although the expected future cost does not take on such a simple form for $T>2$, we can get a suboptimal scheduler by simply applying this inverse-waterfilling at every time slot $t$. In other words, at time
step $t$, perform inverse-waterfilling over the following $t$ channels:
\begin{equation*}
g_t, \underbrace{\frac{1}{\nu_1}, \ldots, \frac{1}{\nu_1} }_{t-1}
\end{equation*}
to determine how many of the unserved $\beta_t$ bits to serve now.
We denote this bit allocation policy as $b_t^{\rm (I)}$. Since $t-1$
of the $t$ channels are equal, the inverse-waterfilling operation is
very simple and the policy is given by
\begin{equation} \label{eq:bt_subopt1}
b_t^{\text{(I)}} (\beta_t, g_t) = \left\langle \frac{1}{t}\beta_t + \frac{t-1}{t}\log_2 \frac{g_t}{\eta_t^{\text{(I)}}} \right\rangle_0^{\beta_t},
\end{equation}
where $\eta_t^{\text{(I)}}=1/\nu_1$ serves as the channel threshold.
Notice that this threshold value depends only on the channel
statistics and is constant with respect to $t$.

When the deadline is far away (large $t$), the first term in
\eqref{eq:bt_subopt1} is negligible and the bit allocation is almost
completely dependent on the instantaneous channel quality. As the
deadline approaches ($t$ decreases toward $1$), the weight of the
channel-dependent second term decreases and the weight of the
delay-associated first term increases.

\subsection{Suboptimal II Scheduler}

The inability to find a general analytic solution to the original
optimization \eqref{eq:Jt_opt} is due to complications caused by the
constraint $0\le b_t\le \beta_t$ (for each $t$) in the dynamic
optimization. However, if we relax this constraint (i.e., allow $b_t
< 0$ and $b_t > \beta_t$ while maintaining the constraint
$\sum_{t=1}^T b_t = B$) we can derive the optimal policy in closed
form.

If we define the function $L_t$ as below, then we can show
inductively that $L_t$ represents the cost-to-go function for the
relaxed optimization:
\begin{equation} \label{eq:Lt}
L_t(\beta_t) = t 2^{\frac{\beta_t}{t}} \G(\nu_t,\nu_{t-1},\ldots,\nu_1)-t\nu_1
\end{equation}
where $\nu_1, \nu_2, \cdots$ are the fractional moments defined in
Appendix \ref{sec:nu} and $\G()$ represents the geometric mean
operation defined in Section \ref{sec-intro}. When $t=1$,
\eqref{eq:Lt} holds trivially. If we assume \eqref{eq:Lt} holds for
$t-1$, then the relaxed optimization for the next time step is given
by
\begin{equation} \label{eq:relaxed_opt}
\min_{b_t} \left(\frac{2^{b_t}-1}{g_t}+L_{t-1}(\beta_t-b_t)\right)
\end{equation}
and the solution (i.e., the optimum scheduler for the relaxed problem) is found by setting the derivative of the objective to zero:
\begin{equation} \label{eq:bt_subopt2_induct}
b_t = \frac{1}{t}\beta_t + \frac{t-1}{t}\log_2 \left(g_t \G(\nu_{t-1}, \ldots, \nu_1)\right).
\end{equation}
By plugging in the optimum value of $b_t$ in \eqref{eq:bt_subopt2_induct} into \eqref{eq:relaxed_opt} and taking expectation with respect to $g_t$, we reach \eqref{eq:Lt}.
%In the induction process, a scheduling policy is obtained as \eqref{eq:bt_subopt2_induct}, which results in a modification
%of the key quantity of suboptimal I scheduler $\eta_t^{\text{(I)}}$ in \eqref{eq:bt_subopt1}. We will investigate common and %different characteristics of suboptimal I and II in later.
By truncating the policy in \eqref{eq:bt_subopt2_induct} at $0$ and
$\beta_t$ we get a policy, referred to as Suboptimal II, for the
original (un-relaxed) problem:
\begin{equation} \label{eq:bt_subopt2}
b_t^{\rm (II)} = \left\langle\frac{1}{t}\beta_t + \frac{t-1}{t}\log_2 \frac{g_t}{\eta_t^{\rm (II)}}\right\rangle_0^{\beta_t},
\end{equation}
where
\begin{equation}
\eta_t^{\rm (II)} = \frac{1}{\G\left( \nu_{t-1}, \nu_{t-2}, \cdots, \nu_1\right)} \label{eq:eta_t_enhanced}
\end{equation}
denotes the threshold that depends only on the statistics not the realizations.

\subsection{Remarks on the Suboptimal Schedulers}

From \eqref{eq:bt_subopt1} and \eqref{eq:bt_subopt2}, we can see
that the two schedulers have a very similar form with the only
difference term being the threshold $\eta_t$.  Notice that both
policies simplify to the optimal policy for $t=2$. Based on the
policy formulations, this subsection investigates the common and
different characteristics of the suboptimal schedulers.

\subsubsection{General Framework}

The two algorithms thus far considered can be cast into a single framework:
\begin{equation} \label{eq:b_t_general}
b_t(\beta_t,g_t) = \left\langle \frac{1}{t}\beta_t + \frac{t-1}{t}\log_2 \frac{g_t}{\eta_t} \right\rangle_0^{\beta_t},
\end{equation}
where $\eta_t$ is the channel threshold determined by the individual
algorithms. This simple allocation strategy reveals how the delay
constraint affects the scheduling algorithms: at time step $t$ serve
a fraction $1/t$ of the remaining bits plus/minus a quantity that
depends on the strength of the current channel compared to a channel
threshold. If the current channel is good (i.e., $g_t$ is bigger
than the threshold $\eta_t$), additional bits are served (up to
$\beta_t$), while fewer bits are served when the current channel is
poorer than the threshold. Furthermore, note that when $t$ is large
(i.e., far from the deadline), the first term $\beta_t/t$ is very
small and the number of bits served is almost completely determined
by the current channel conditions. This agrees with intuition that
we should make aggressive, almost completely channel dependent (and
deadline independent) decisions when the deadline is far away, while
we should make more conservative (more deadline dependent, less
channel dependent) decisions near the deadline (small $t$).

Using $\log_2 10 \approx 3$ we can rewrite the policy in dB units
as:
\begin{equation}
b_t(\beta_t,g_t) \approx \left\langle \frac{1}{t}\beta_t + \left(\frac{t-1}{t}\right) \left(\frac{g_t^\text{dB}-\eta_t^\text{dB}}{3}\right)\right\rangle_0^{\beta_t}.
\end{equation}
For large $t$, approximately one bit is allocated for every $3$ dB
by which the channel exceeds the threshold.

\subsubsection{Channel Thresholds}
\label{sec:threshold}

The difference between the two policies is in the threshold values,
which are illustrated in Fig.~\ref{fig:threshold_subopt} for a
particular channel distribution. The suboptimal I scheduler has a
constant threshold $\eta_t^{\text{(I)}}=1/\nu_1$ for all $t$,
whereas Suboptimal II has a threshold that increases with $t$ (by
Proposition I). It is intuitive to use a larger threshold when the
deadline is far away (large $t$), as the scheduler can be more
selective because many different channels remain to be seen before
the deadline is reached.
\begin{figure}
\centering
\includegraphics[width=0.60\textwidth]{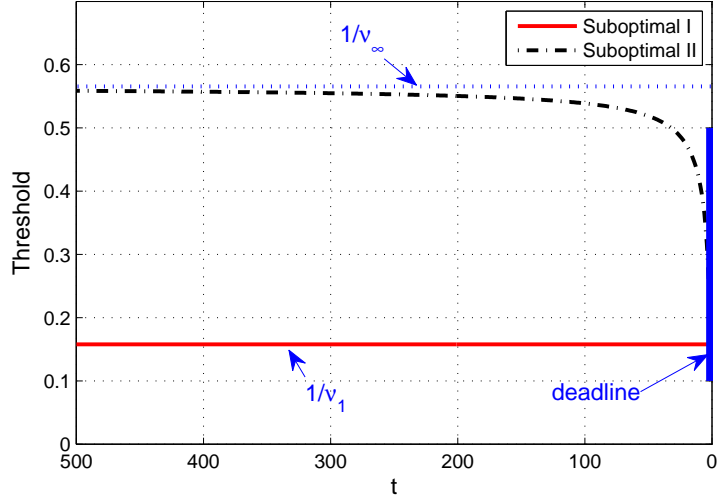}
\caption{Thresholds $\eta_t^\text{(I)}$ for the suboptimal I scheduler and $\eta_t^\text{(II)}$ for the suboptimal II scheduler when the channel state has the truncated exponential with $\gamma_0=0.001$.}
\label{fig:threshold_subopt}
\end{figure}

By using a constant threshold, Suboptimal I is not selective enough
and transmits too many bits when the deadline is far away. To see
this, consider the average number of bits transmitted in slot $t$
(ignoring truncation):
\begin{equation}
\E_{g_t}[b_t(\beta_t,g_t)] = \E_{g_t}\left[\frac{1}{t}\beta_t+\frac{t-1}{t}\log_2\frac{g_t}{\eta_t}\right]
= \frac{1}{t}\beta_t+\frac{t-1}{t}\E\left[\log_2\frac{g_t}{\eta_t}\right].
\end{equation}
Because $\eta_t^\text{(I)}=1/\nu_1=1/\E\left[1/g\right]$, by
Jensen's inequality
$\E\left[\log_2\frac{g_t}{\eta_t^\text{(I)}}\right] = \E\left[\log_2
g_t \right] +  \log_2 \E\left[1/g\right] > 0$.  Thus, Suboptimal I
transmits more than $\frac{B}{T}$ bits on average when scheduling begins, which is in some sense overly aggressive.
%This behavior is due to the fact that the threshold value is determined that the current %scheduling slot is the one time slot just before the last time slot for all $t>1$ from the construction of the suboptimal I %policy.
On the other hand, the quantity
$\E_{g_t}\left[\log_2\left(g_t/\eta_t^\text{(II)}\right)\right]$
decreases as $t$ increases and the limit is given by
\begin{equation}
\lim_{t\to\infty} \E_{g_t}\left[\log_2\frac{g_t}{\eta_t^\text{(II)}}\right] = 0
\end{equation}
because of Proposition \ref{prop:nu}. This implies that the
suboptimal II scheduler allocates $B/T$ bits on the average when the
deadline is far away and thus, unlike Suboptimal I, is not biased or
overly aggressive. Numerical results given later support the fact
that Suboptimal II generally performs better than Suboptimal I.

\subsection{Equal-bit Scheduler}

For comparison purposes, we consider one of the simplest causal schedulers: equal-bit scheduler. This policy allocates $B/T$ bits in each time slot, regardless of channel conditions, i.e.,
\begin{equation} \label{eq:bt_eq}
b_t^\text{eq}(\beta,g_t) = \frac{B}{T} = \frac{1}{t}\beta_t.
\end{equation}
The corresponding expected energy is given by
\begin{equation} \label{eq:EJt_eq}
\bar{J}_t^\text{eq}(\beta)= t(2^{\frac{\beta}{t}}-1)\E\left[\frac{1}{g}\right]=t(2^{\frac{\beta}{t}}-1)\nu_1.
\end{equation}

Although equal-power scheduling is asymptotically optimal for the dual problem of maximizing rate over $T$ slots when given a
finite energy budget in the high power regime \cite{Negi_IT02}, it will be seen that equal-bit scheduling is suboptimal
even when $B$ is large. %However, we will see later that there exists a significant energy performance gap from the optimal policy in our problem.

\subsection{Inverse Waterfilling Interpretation}

If Suboptimal I and II and the equal-bit schedulers are compared to
the optimal non-causal policy (inverse waterfilling),  one can see
that each of the algorithms mimics inverse waterfilling using either
the current channel or channel statistics for the future channels,
as summarized in Table \ref{tbl:compare}.
\begin{table*}
\centering
\caption{Waterfilling analogy.} \label{tbl:compare}
\begin{tabular}{|l||c|}
\hline
& At each $t$, perform inverse-waterfilling over the following channels \\ \hline \hline
Equal-bit scheduler & $g_t, \underbrace{g_t,g_t, \cdots, g_t}_{t-1}$  \\ \hline
Suboptimal I scheduler & $g_t, \underbrace{\frac{1}{\nu_1},\frac{1}{\nu_1}, \cdots, \frac{1}{\nu_1}}_{t-1}$ \\ \hline
Suboptimal II scheduler & $g_t, \frac{1}{\nu_{t-1}}, \frac{1}{\nu_{t-2}}, \cdots, \frac{1}{\nu_1}$ \\ \hline
Non-causal IWF & $g_t, g_{t-1}, g_{t-2}, \cdots, g_1$ \\ \hline
\end{tabular}
\end{table*}

%%%%%%%%%%%%%%%%%%%%%%%%%%%%%%%%%%%%%%%%%%%%%%%%%%%%%%%%%%%%%%%%%%%%%
\section{Analysis \& Numerical Results}
%%%%%%%%%%%%%%%%%%%%%%%%%%%%%%%%%%%%%%%%%%%%%%%%%%%%%%%%%%%%%%%%%%%%%

In this section, we compare the performance of the optimal,
Suboptimal I and II, and equal-bit schedulers.  For $T=2$ we are
able to quantify the advantage of optimal scheduling relative to
equal bit scheduling in two extreme cases, while for $T>2$ we can
only consider numerical results.

\subsection{Asymptotic Analysis for $T=2$}

From the optimal scheduling expression for $T=2$ given in \eqref{eq:b2_opt_T2}, we can see that the
packet is split over both time slots (i.e., $0 < b_2 < B$) if and only if $2^{-B} /\nu_1 < g_2 < 2^{B}/\nu_1$. As $B\to 0$, the probability of this event goes to zero: if $g_2 < 1/\nu_1$ then all
bits are deferred to the final slot, while if $g_2 > 1/\nu_1$ all bits are served at $t=2$. As a result,
the expected energy cost takes on a rather simple form as $B\to 0$ (the derivation is provided in Appendix \ref{sec:pf_thm_ratio_T2}): %A very similar statement can be made
%about non-causal inverse-waterfilling (both slots are used if and only if $2^{-B} g_1 \le g_2 \le 2^{B}
%g_1$).
%As $B\to 0$, it is straightforward to confirm that the probability that all of the bits are served in only one of the two time slots goes to 1 for any of the three policies considered. Therefore, the average energy costs converge to (derivations are given in Appendix \ref{sec:pf_thm_enh_finiteT})
\begin{equation}
\bar{J}_2^\text{opt}(B) \cong (2^B-1) \E\left[\min\left(\frac{1}{g_2},\nu_1\right)\right],\label{eq:EJ2_opt_smallB}
\end{equation}
where $\cong$ represents equivalence in the limit (i.e., the ratio
between both sides converges to $1$ as $B\to 0$). This implies that
the corresponding effective channel is $\max(g_2, 1/\nu_1)$. On the
other hand, when $B\to\infty$ the probability of only utilizing one
slot goes to zero and the limiting expected cost can be derived. The
following theorem quantifies the power advantage of optimal
scheduling:
\begin{theorem} \label{thm:ratio_T2}
The energy savings of optimal scheduling with respect to equal bit
scheduling in extremes of $B \rightarrow 0$ and $B \rightarrow
\infty$ is given by:
\begin{eqnarray}
\lim_{B\to 0}\frac{\bar{J}_2^{\text{eq}}(B)}{\bar{J}_2^{\text{opt}}(B)} &=&
\frac{\nu_1}{\E\left[\min\left(\frac{1}{g},\nu_1\right)\right]},   \label{eq:J_ratio_eq_opt_B_small}\\
\lim_{B\to\infty}\frac{\bar{J}_2^{\text{eq}}(B)}{\bar{J}_2^{\text{opt}}(B)}
&=& \sqrt{\frac{\nu_1}{\nu_2}} \label{eq:J_ratio_eq_opt_B_large}.
\end{eqnarray}
%where $\bar{J}_2^{\text{eq}}$ is given by \eqref{eq:EJt_eq}.
\end{theorem}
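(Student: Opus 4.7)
The plan is to handle the two limits separately. In both cases $\bar{J}_2^{\text{eq}}(B) = 2(2^{B/2}-1)\nu_1$ is in closed form via \eqref{eq:EJt_eq}, so all the work lies in estimating $\bar{J}_2^{\text{opt}}(B)$. For the $B\to 0$ limit I would take the equivalence \eqref{eq:EJ2_opt_smallB} as given, form the ratio, and use $2^x - 1 = x\ln 2 + o(x)$: the common $B\ln 2$ factor cancels and the limit is $\nu_1/\E[\min(1/g,\nu_1)]$ as required.

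For the $B\to\infty$ limit I would split $\bar{J}_2^{\text{opt}}(B)$ according to the three branches of \eqref{eq:b2_opt_T2}. Let $A_B = \{g_2 < 2^{-B}/\nu_1\}$, $M_B = \{2^{-B}/\nu_1 \le g_2 \le 2^B/\nu_1\}$, and $C_B = \{g_2 > 2^B/\nu_1\}$. On $A_B$ the policy gives $b_2=0$ and the conditional expected cost (averaged over $g_1$) equals $\nu_1(2^B-1)$; on $C_B$ it gives $b_2=B$ with conditional cost $(2^B-1)/g_2$; on $M_B$ substituting the interior allocation $b_2 = \tfrac{B}{2} + \tfrac{1}{2}\log_2(g_2\nu_1)$ into $E_2(b_2,g_2) + \E_{g_1}[E_1(B-b_2,g_1)]$ yields, after a short algebraic simplification, $2\cdot 2^{B/2}\sqrt{\nu_1/g_2} - 1/g_2 - \nu_1$.

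The core step will be showing that the $A_B$ and $C_B$ contributions are $o(2^{B/2})$. For $A_B$ I would use the Markov-style bound $P(g < a) \le a\nu_1$ (immediate from $\nu_1 = \E[1/g] \ge (1/a)P(g<a)$), which bounds the contribution by $\nu_1(2^B-1)\cdot 2^{-B} = O(1)$; for $C_B$ the pointwise bound $1/g_2 \le \nu_1 2^{-B}$ on the event gives a contribution at most $\nu_1\,P(C_B)\to 0$. For the $M_B$ term, dominated convergence with dominating function $\sqrt{\nu_1/g}$ (integrable because Jensen gives $\nu_2 \le \nu_1 < \infty$) and indicator converging to $1$ yields $\E[\sqrt{1/g_2}\mathbf{1}_{M_B}] \to \sqrt{\nu_2}$, so $\bar{J}_2^{\text{opt}}(B) = 2\cdot 2^{B/2}\sqrt{\nu_1\nu_2} + O(1)$. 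Dividing by $2(2^{B/2}-1)\nu_1$ then produces $\sqrt{\nu_1/\nu_2}$, matching \eqref{eq:J_ratio_eq_opt_B_large}.

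The main obstacle is the $A_B$ estimate: the conditional cost there grows like $2^B$ and must still vanish relative to $2^{B/2}$ after averaging. This is what forces the combined probability/cost argument above, using only $\nu_1<\infty$ and no assumption on the density of $g$ near zero. Once this bound is secured, everything else reduces to routine collecting of fractional moments.
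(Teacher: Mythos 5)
Your proposal follows essentially the same route as the paper's Appendix C: decompose $\bar{J}_2^{\text{opt}}(B)$ according to the three branches of the optimal policy \eqref{eq:b2_opt_T2}, identify the dominant term in each limit, and compare with $\bar{J}_2^{\text{eq}}(B)=2(2^{B/2}-1)\nu_1$; your Markov bound $P(g<a)\le a\nu_1$ and the dominated-convergence step simply make explicit the tail estimates that the paper dismisses with ``by the limit rules.'' The only caveat is that for $B\to 0$ you cite \eqref{eq:EJ2_opt_smallB} rather than deriving it, but that derivation is the same three-branch computation (the middle-region contribution is at most $(2^B-1)\nu_1\Pr\{2^{-B}/\nu_1<g<2^{B}/\nu_1\}=o(2^B-1)$ since $g$ is continuous), so nothing essential is missing.
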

\begin{proof}
See Appendix \ref{sec:pf_thm_ratio_T2}.
\end{proof}

Table \ref{tbl:gaps} summarizes typical values of the energy savings
(at the extremes of $B\to 0$ and $B\to\infty$) for several fading
distributions, as given by Theorem \ref{thm:ratio_T2}. As
intuitively expected, the energy advantage is larger for more severe
fading distributions.  In other words, optimal scheduling is more
beneficial in more severe fading environments.
\begin{table*}
\centering
\caption{Average energy offsets for $T=2$} \label{tbl:gaps}
\begin{tabulary}{50pt}{|l||p{2.6cm}|p{2.6cm}|}
\hline
& \multicolumn{2}{c|}{equal-bit vs. optimal causal ($\bar{J}_2^\text{eq}(B)/\bar{J}_2^\text{opt}(B)$)}\\ \cline{2-3}
\raisebox{1.5ex}[0cm][0cm]{distribution of channel state $g$} & $B\to 0$ & $B\to\infty$ \\ \hline
truncated exponential with $\gamma_0=0.1$ & 1.96 dB & 0.44 dB  \\ \hline
truncated exponential with $\gamma_0=0.01$ & 3.26 dB & 1.04 dB \\ \hline
truncated exponential with $\gamma_0=0.001$ & 4.32 dB & 1.68 dB  \\ \hline
$1\times 2$ Rayleigh fading ($g\sim \chi^2_4$) & 1.99 dB & 0.52 dB \\ \hline
$1\times 3$ Rayleigh fading ($g\sim \chi^2_6$) & 1.37 dB & 0.27 dB   \\ \hline
$1\times 4$ Rayleigh fading ($g\sim \chi^2_8$) & 1.10 dB& 0.18 dB   \\ \hline
\end{tabulary}
\end{table*}

Figure \ref{fig:overall_avg_energy_T2} contains a plot of expected
energy versus $B$ for the optimal and equal-bit schedulers as well
as a plot of the energy difference between the two schedulers as a
function of $B$,  for channel state $g$ distributed as a truncated
exponential with the threshold $\gamma_0=0.001$.  The energy
advantage is seen to decrease from its $B \rightarrow 0$ advantage
of $4.32$ dB to the large $B$ asymptote of $1.68$ dB.
\begin{figure}
\centering
\subfloat[Average total energy]{\includegraphics[width=0.48\textwidth]{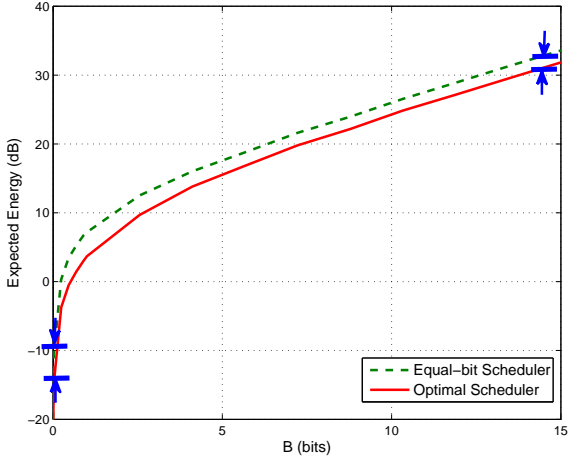}}\hfill
\subfloat[Energy advantage of optimal relative to equal-bit scheduling (difference in dB)]{\includegraphics[width=0.48\textwidth]{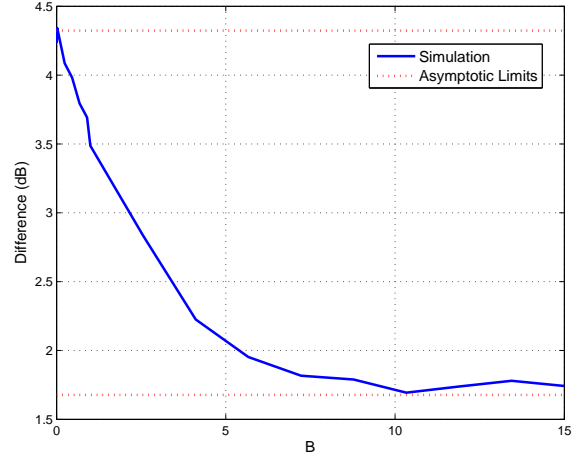}}
\caption{Average total energy consumptions for $T=2$ and average energy offset when $g$ is a truncated exponential variable with threshold $\gamma_0=0.001$ } \label{fig:overall_avg_energy_T2}
\end{figure}

\subsection{Numerical Results for $T>2$}

Throughout the simulations, we assume that the channel state $g_t$
is a truncated exponential with parameter $\lambda=1$ and threshold
$\gamma_0=0.001$. The factional moments of this truncated
exponential variable can be calculated as:
\begin{equation*}
\nu_m = \begin{cases}
\lambda e^{\lambda \gamma_0} \text{E}_1(\lambda \gamma_0), & m=1,\\
\lambda \left[e^{\lambda \gamma_0} \Gamma\left(\frac{m-1}{m}, \lambda\gamma_0\right)\right]^m, & m>1,
\end{cases}
\end{equation*}
where ${\rm E_1}(\cdot)$ and $\Gamma(\cdot,\cdot)$ denote the exponential integral and the incomplete gamma function, respectively, and its limit is given by $\nu_\infty =\frac{1}{\gamma_0} e^{-e^{\lambda\gamma_0}\text{E}_1(\lambda \gamma_0)}$.

Figures \ref{fig:overall_avg_energy_T5_T50}a and \ref{fig:overall_avg_energy_T5_T50}b compare the energy
consumption of the four different algorithms (equal-bit, Suboptimal I and II, optimal causal) for
$T=5$ and $T=50$, in which the optimal scheduler is calculated by numerical methods. The $x$-axis
denotes the total number of bits $B$ transmitted in $T$ time slots, and thus $B/T$ can be thought of as the
average bits per channel use. The $y$-axis denotes the average total energy cost $\bar{J}_T^\text{eq}$, $\bar{J}_T^\text{(I)}$, $\bar{J}_T^\text{(II)}$, and $\bar{J}_T^\text{opt}$.
\begin{figure}
\centering
\subfloat[$T=5$]{\includegraphics[width=0.48\textwidth]{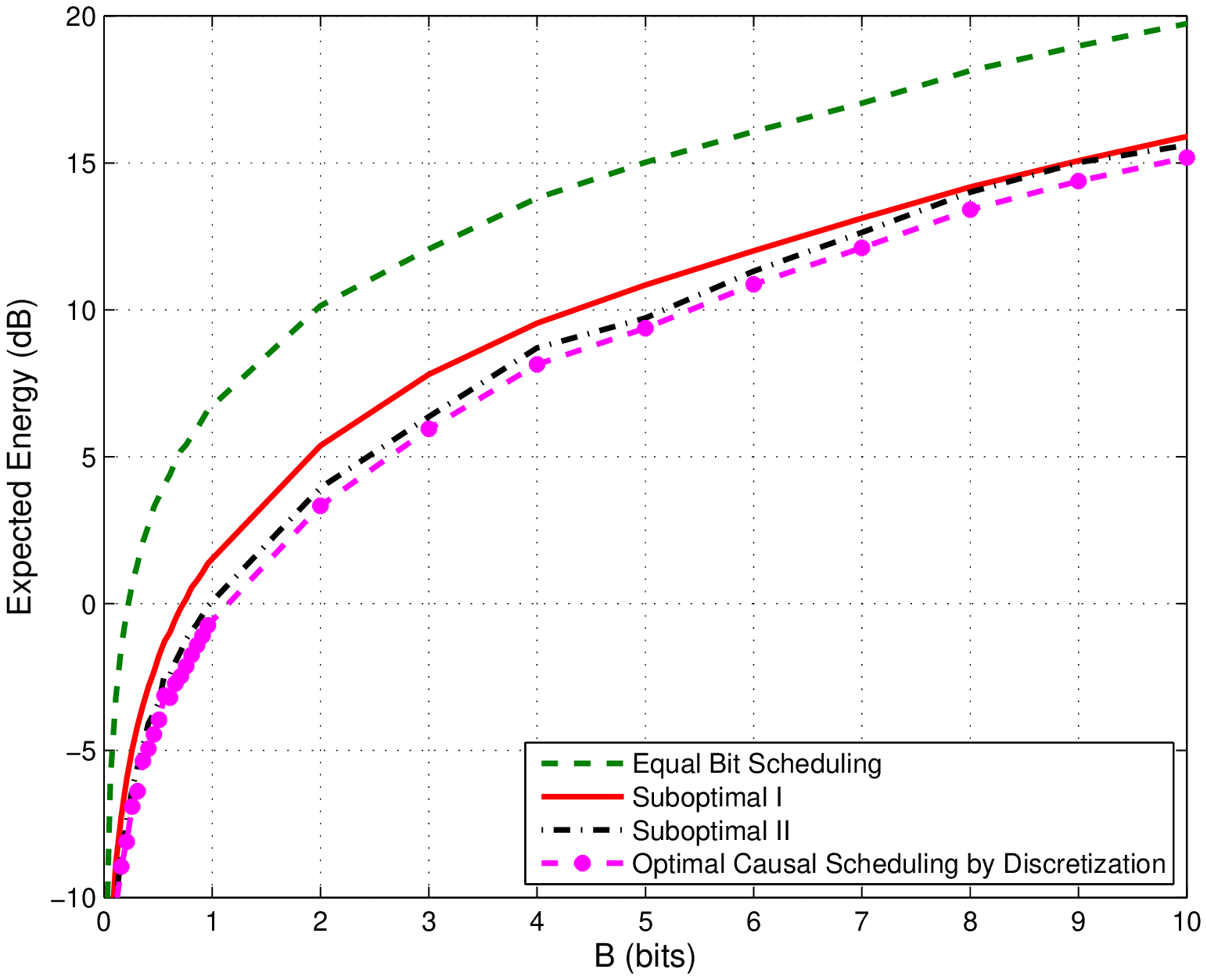}}\hfill
\subfloat[$T=50$]{\includegraphics[width=0.48\textwidth]{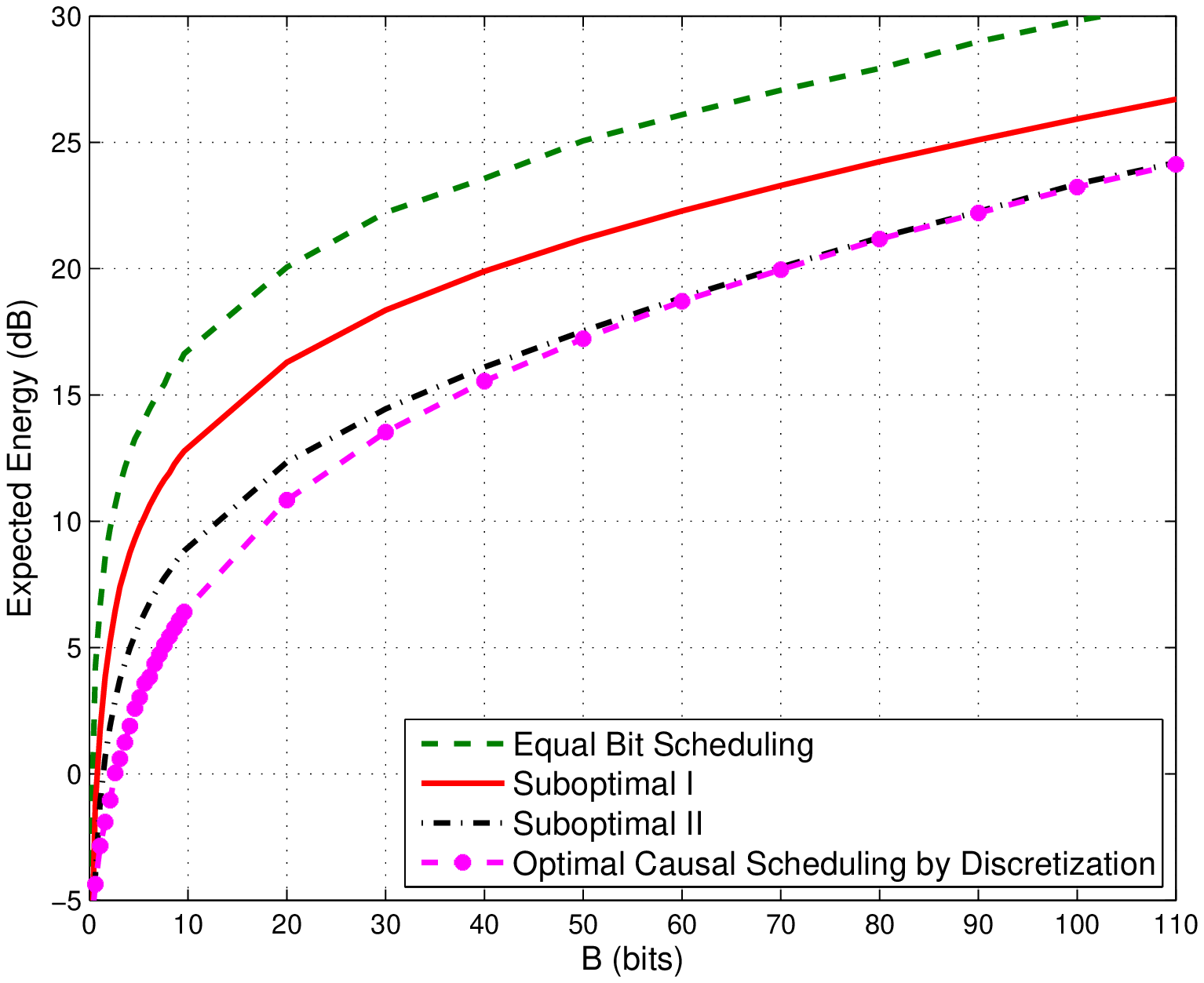}}
\caption{Average total energy consumption for $T=5$ and $T=50$} \label{fig:overall_avg_energy_T5_T50}
\end{figure}

From Fig.~\ref{fig:overall_avg_energy_T5_T50}a we see that both
Suboptimal I and II perform nearly as well as the optimal scheduler,
although Suboptimal II performs better than I. There are significant
differences between the equal-bit and optimal schedulers, which is
to be expected given the time diversity available over the five time
slots. In Fig.~\ref{fig:overall_avg_energy_T5_T50}b we see even
larger differences between equal-bit and optimal causal, which can
be explained by the even larger degree of time diversity ($T=50$).
Furthermore, Suboptimal II significantly outperforms Suboptimal I
for $T=50$ due to the over-aggressive nature of Suboptimal I.
Suboptimal II performs nearly as well as the optimal scheduler when
$B$ is approximately $50$ or larger (i.e., $B/T \ge 1$), but is
sub-optimal for smaller values
of $B$. 

Figure \ref{fig:avg_T10_profile} shows the expected bit allocation
$\E[b_t]$ for the different algorithms for $T=10$ slots when $B$ is
 large ($B=50$, upper) and small ($B=2$, lower). While the optimal
causal scheduling policy allocates roughly an equal number of bits
(averaged across different realizations, and not for each particular
realization) to each of the slots, Suboptimal I is immediately seen
to allocate too many bits (on average) to early time slots which
agrees with our earlier claim that this algorithm is often
overly-aggressive as explained in Section \ref{sec:threshold}. For
$B=50$ the bit allocation of Suboptimal II is very similar to that
of the optimal policy. However, for $B=2$ Suboptimal II is also
overly-aggressive as compared to the optimal. We suspect that the
performance of Suboptimal II could be further improved by performing
some heuristic modifications to the algorithm, but this is beyond
the scope of the paper and is left to future work.
\begin{figure}
\centering
\includegraphics[width=0.63\textwidth]{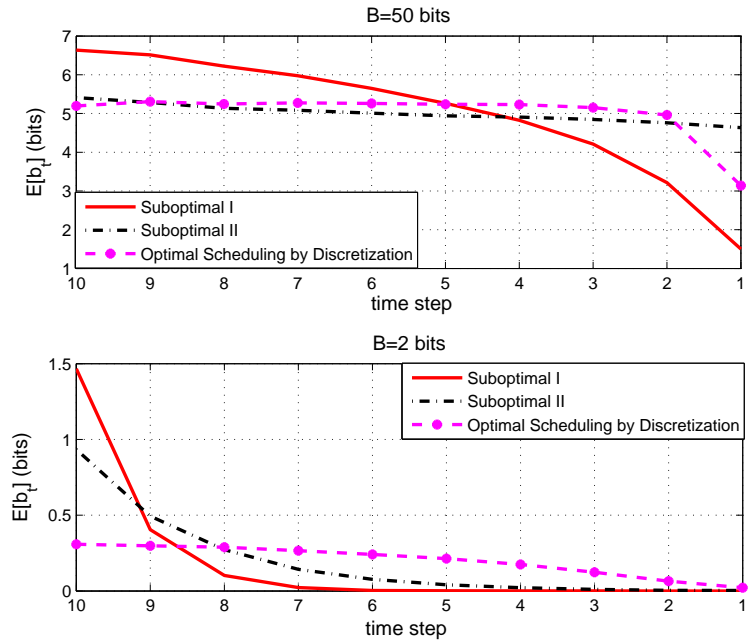}
\caption{Bit allocation profiles for $T=10$ when $B=50$ (upper) and $B=2$ (lower)} \label{fig:avg_T10_profile}
\end{figure}

To summarize, the numerical results indicate that (a) Suboptimal II
is nearly optimal for moderate to large values of $B$, (b)
Suboptimal II outperforms Suboptimal I, and (c) neither suboptimal
algorithm is near optimal for small values of $B$. In the next
section, we will consider a policy that performs close to the
optimal when $B$ is small.

%%%%%%%%%%%%%%%%%%%%%%%%%%%%%%%%%%%%%%%%%%%%%%%%%%%%%%%%%%%%%%%%%%%%%
\section{One-shot Allocation}
%%%%%%%%%%%%%%%%%%%%%%%%%%%%%%%%%%%%%%%%%%%%%%%%%%%%%%%%%%%%%%%%%%%%%

In some settings it may be undesirable to split the packet across
multiple time slots, e.g., because there is a large overhead
associated with each slot used for transmission. In this scenario we
may wish to find only one time slot among the
$T$ slots for the transmission of  $B$ bits; i.e., the action $b_t$ can be either $0$ or $B$. %Consequently, we
%have $\beta_t\in \{0,B\}$: $\beta_t=0$ means that the bits have already been served, while $\beta_t=B$ means
%that the bits have yet to be served.

The dynamic program in this setting can be written as
\begin{eqnarray}
J_1(B) &=& \frac{2^{B}-1}{g_1},\\
J_{t}(B) &=& \min\left\{ \frac{2^{B}-1}{g_t}, \E[J_{t-1}(B)] \right\},\qquad t=2,\cdots,T, \label{eq:oneshot_dp}
\end{eqnarray}
which is precisely an optimal stopping problem
\cite{Bertsekas_DP1_Book05}. Thus, a threshold policy is optimal:
allocate all $B$ bits at the first slot $t$ such that $g_t >
1/\omega_t$, where $1/\omega_t$ is the threshold.  That is,
\begin{equation} \label{eq:stopping_bt}
b_t = \begin{cases}
B,& t=\max\left\{s: g_s>1/\omega_s\right\},\\
0,& \text{elsewhere}.
\end{cases}
\end{equation}
At $t=1$ a packet must be served and thus $\omega_1$ is infinite. Because the expected cost-to-go decreases as $t$ increases, the threshold also decreases with $t$. In Appendix \ref{sec:pf_stopping_omega_t} we show the thresholds are given by the following recursive formula.
\begin{equation} \label{eq:stopping_omega_t}
\omega_t = \begin{cases}
%\infty, & t=1,\\
\E\left[\frac{1}{g}\right], & t=2,\\
\E\left[\frac{1}{g}\Big\vert \frac{1}{g_t}<\omega_{t-1}\right]\Pr\left\{\frac{1}{g_t}<\omega_{t-1}\right\}+\omega_{t-1}\Pr\left\{\frac{1}{g_t}\ge \omega_{t-1}\right\}, & t=3,\cdots, T.
\end{cases}
\end{equation}
Notice that the threshold $1/\omega_t$ depends only on the channel statistics and does not depend on $B$.
\begin{figure}
\centering
\includegraphics[width=0.6\textwidth]{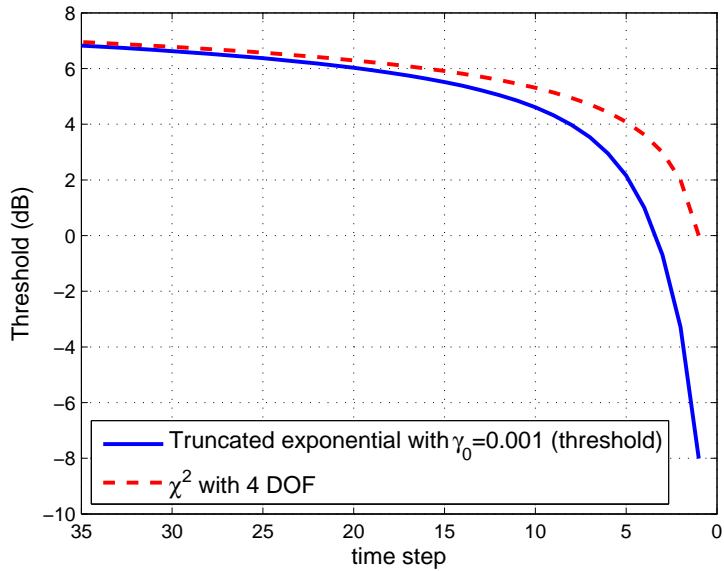}
\caption{Thresholds for the one-shot allocation} \label{fig:omega_update_together}
\end{figure}

Figure \ref{fig:omega_update_together} illustrates the thresholds
for the truncated exponential $g$ (with $\lambda=1$ and
$\gamma_0=0.001$) and the chi-squared $g$ (with 4 degrees of
freedom).  Figure \ref{fig:cmp_stopper} illustrates the energy usage
(normalized by $T$) of the optimal one-shot allocation policy and the multiple slot policies. The $B/T=0.1$ and $B/T=1$ curves
illustrate performance for relatively small and large values of $B$,
respectively.  When $B$ is small, the energy of the one-shot
allocation is nearly the same as the optimal policy that allows for
multiple slots to be used. However, this one-shot allocation is not
appropriate when $B$ is relatively large because the required energy
of the one-shot policy grows exponentially with $B$.
\begin{figure}
\centering
\includegraphics[width=0.63\textwidth]{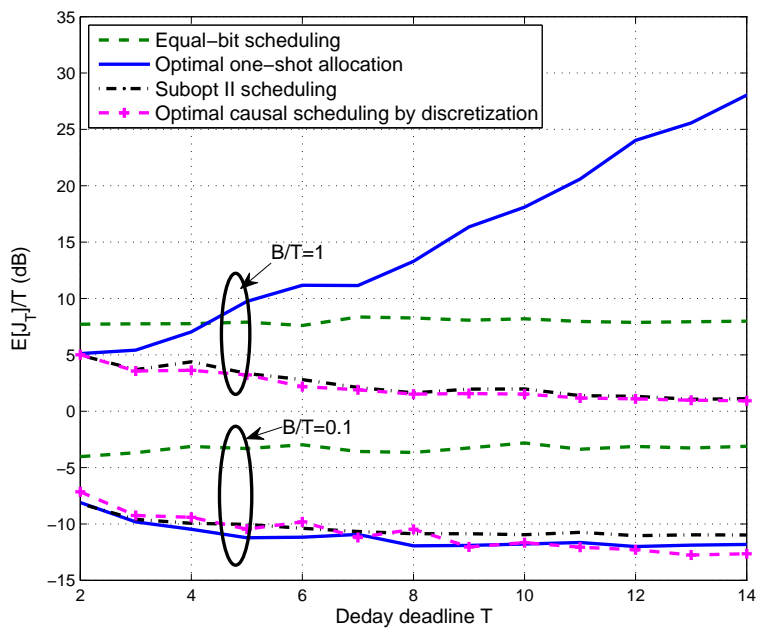}
\caption{Performance of the optimal one-shot allocation compared with multi-slot allocation algorithms} \label{fig:cmp_stopper}
\end{figure}

%%%%%%%%%%%%%%%%%%%%%%%%%%%%%%%%%%%%%%%%%%%%%%%%%%%%%%%%%%%%%%%%%%%%%
\section{Conclusion}
%%%%%%%%%%%%%%%%%%%%%%%%%%%%%%%%%%%%%%%%%%%%%%%%%%%%%%%%%%%%%%%%%%%%%

In this paper we considered the problem of bit/energy allocation for
transmission of a finite number of bits over a finite delay horizon,
assuming perfect instantaneous channel state information is
available to the transmitter and that the energy and rate are
related by the Shannon-type (exponential) function. We derived the
optimal scheduling policy when the deadline spans two time slots,
and derived two near-optimal policies for general deadlines. The
proposed schedulers have a simple and intuitive form that gives
insight into the optimal balance between channel-awareness (i.e.,
opportunism) and deadline-awareness in a delay-limited setting. We
also considered the same problem under the additional constraint
that only a single of the available time slots can be used, and in
this case found the optimal threshold-based policy.  Based upon the
policy constructions and the numerical results, we observed that the
suboptimal II scheduler is near-optimal for large/moderate values of
$B$ while the one-shot policy is near-optimal for small values of
$B$.

Given the increasing volume of delay-limited traffic over packet-switched wireless networks (e.g., VoIP or
multimedia transmission in 3G systems), we expect problems of this sort to become increasingly important. Of
course, the problem considered here represents only a particular instance of the rich space of delay-limited
scheduling problems. Interesting extensions include consideration of discrete code rates, peak power
constraints, and multi-user issues, and we hope this work provides useful insight for some of these other
formulations.

%%%%%%%%%%%%%%%%%%%%%%%%%%%%%%%%%%%%%%%%%%%%%%%%%%%%%%%%%%%%%%%%%%%%%
\appendices
%%%%%%%%%%%%%%%%%%%%%%%%%%%%%%%%%%%%%%%%%%%%%%%%%%%%%%%%%%%%%%%%%%%%%

%%%%%%%%%%%%%%%%%%%%%%%%%%%%%%%%%%%%%%%%%%%%%%%%%%%%%%%%%%%%%%%%%%%%%
\section{Non-causal Scheduling}
\label{sec:wf}
%%%%%%%%%%%%%%%%%%%%%%%%%%%%%%%%%%%%%%%%%%%%%%%%%%%%%%%%%%%%%%%%%%%%%

If the channel states are known non-causally, i.e., $g_T,\ldots, g_1$ are known at $t=T$, the optimal
scheduling/allocation is determined by waterfilling because each time slot serves as a
parallel channel. While conventional waterfilling maximizes rate subject to a power constraint,
this is the dual of minimizing power/energy subject to a rate/bit constraint and is referred to as \emph{inverse-waterfilling} (IWF):
\begin{equation} \label{eq:JT_IWF}
J_T^\text{IWF}(B,\{g_t\}_{t=1}^T)=\min_{b_T,\cdots,b_1} \sum_{t=1}^T \frac{2^{b_t}-1}{g_t},
\end{equation}
subject to $\sum_{t=1}^T b_t=B$ and $b_t\ge 0$.  This is a convex optimization problem and can be easily solved using the standard Lagrangian method:
\begin{equation} \label{eq:iwf_simple}
b_t^\text{IWF}=\left\langle \log_2 \left(\frac{g_t}{g_\text{th}}\right) \right\rangle_0^\infty,
%b_t^\text{IWF} = \left\langle \frac{1}{T'}B+\log_2\left(\frac{g_t}{\eta}\right) \right\rangle_0^\infty,
\end{equation}
%where $T'\left(=\sum_{t=1}^T 1_{\{g_t>g_\text{th}\}}\right)$ denotes the number of utilized slots during $T$, $g_\text{th}$ %represents the channel threshold which is calculated by $\sum\limits_{t=1}^T \left\langle\log_2\left(\frac{g_t}{g_\text{th}}\right) \right\rangle_0^\infty = B$, and $\eta=\left(\prod\limits_{t=1}^T g_t^{1_{\{g_t>g_\text{th}\}}}\right)^{1/T'}$. 
where $g_\text{th}$ is the solution to $\sum_{i=1}^T \left\langle \log_2\left(\frac{g_i}{g_\text{th}}\right) \right\rangle_0^\infty = B$.
A time slot $t$ is called \emph{utilized} if a positive bit is scheduled at $t$, i.e., $b_t>0$ or equivalently $g_t>g_\text{th}$.
With algebraic manipulations, we can express this IWF policy in \eqref{eq:iwf_simple} sequentially like other causal scheduling policies as
\begin{equation}
b_t^\text{IWF}(\beta_t,g_t) =
\frac{1}{t'}\beta_t+\frac{t'-1}{t'}\log_2 \frac{g_t}{\eta_t^\text{IWF}}, \qquad \text{if}\;\; g_t > g_\text{th},
\label{eq:bt_wf_seq}
\end{equation}
otherwise $b_t^\text{IWF}(\beta_t,g_t)=0$,
where $t'=\sum_{i=1}^t 1_{\{g_i\ge g_{\rm th}\}}$ and $\eta_t^\text{IWF}=\left(\prod\limits_{i=1}^{t-1} g_i^{1_{\{g_i>g_\text{th}\}}}\right)^{1/(t'-1)}$. Notice that $g_{t-1},\cdots,g_1$ are relatively future quantities at slot $t$.

Like causal scheduling, the bit allocation process is described in two stages: first the remaining bits are divided equally amongst the active slots and then bits are added/subtracted depending on the
channel state.

%%%%%%%%%%%%%%%%%%%%%%%%%%%%%%%%%%%%%%%%%%%%%%%%%%%%%%%%%%%%%%%%%%%%%
\section{Channel Characterization by Fractional Moments}
\label{sec:nu}
%%%%%%%%%%%%%%%%%%%%%%%%%%%%%%%%%%%%%%%%%%%%%%%%%%%%%%%%%%%%%%%%%%%%%

We characterize the statistics of the channel states by using the fractional moments of
the inverse of the channel states $g$.
We define the following quantity for $m=1,2,\ldots$,
\begin{equation}\label{eq:nu_m}
\nu_m = \left(\E\left[\left(\frac{1}{g}\right)^{\frac{1}{m}}\right]\right)^m.
\end{equation}
Then, the properties of these quantities are summarized as follows:% in Proposition \ref{prop:nu}.
\begin{proposition} \label{prop:nu}
The channel statistics defined according to \eqref{eq:nu_m} for a non-degenerate\footnote{This eliminates the delta-type density (point-mass) function.} positive random variable have the following properties:
\begin{compactenum}[(a)]
\item the sequence $\{\nu_m\}$ is strictly decreasing and the limit exists (denote the limit as $\nu_\infty$),
\item the sequence $\{(\nu_m \nu_{m-1} \cdots \nu_1)^{1/m}\}$ is also strictly decreasing and its limit is also $\nu_\infty$.
%\item the sequence $\{\overline{\nu}_m\}$ is strictly increasing and $\overline{\nu}_m < 1$ for $m=2,3,\ldots$.
\end{compactenum}
\end{proposition}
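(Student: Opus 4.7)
The plan is to recognize $\nu_m$ as a fractional-order norm of the random variable $X := 1/g$. The definition \eqref{eq:nu_m} gives $\nu_m = (\E[X^{1/m}])^{m} = \|X\|_{1/m}$, the $L^{1/m}$-norm of $X$ on the underlying probability space. This reformulation reduces part (a) to a well-known fact about $L^p$ norms and part (b) to a geometric-mean argument driven by (a).

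For part (a), I invoke the classical strict monotonicity of $L^p$-norms for a probability measure: for $0 < p < q$, $\|X\|_p \le \|X\|_q$, with equality iff $X$ is a.s.\ constant (a consequence of Jensen's inequality applied to the strictly convex function $y \mapsto y^{q/p}$ on $X^p$). Applying this with $p = 1/(m+1)$ and $q = 1/m$ yields $\nu_{m+1} < \nu_m$, using the assumption that $g$, and hence $X$, is non-degenerate. The sequence $\{\nu_m\}$ is positive and decreasing, so the limit $\nu_\infty$ exists. The standard identity $\lim_{p\to 0^+}\|X\|_p = \exp(\E[\ln X])$ (proved by showing $(1/p)\ln \E[X^p] \to \E[\ln X]$ via L'Hopital on $\ln \E[X^p]$, or via dominated convergence applied to $(X^p-1)/p$) identifies $\nu_\infty = \G[1/g]$.

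For part (b), set $G_m := (\nu_1\nu_2\cdots \nu_m)^{1/m}$. The recursion $G_{m+1}^{m+1} = G_m^{m}\,\nu_{m+1}$ rearranges to $G_{m+1} = G_m^{\,m/(m+1)}\,\nu_{m+1}^{1/(m+1)}$, so $G_{m+1} < G_m$ iff $\nu_{m+1} < G_m$. Since $G_m$ is the geometric mean of $\nu_1 \ge \nu_2 \ge \cdots \ge \nu_m$ (by part (a)), we have $G_m \ge \nu_m > \nu_{m+1}$, establishing strict decrease. For the limit, take logarithms: $\ln G_m = (1/m)\sum_{k=1}^m \ln \nu_k$. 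Because $\ln \nu_k \to \ln \nu_\infty$, the Ces\`aro mean theorem gives $\ln G_m \to \ln \nu_\infty$, and exponentiation yields $G_m \to \nu_\infty$.

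The main obstacle is largely bookkeeping rather than conceptual. The one genuinely delicate point is the identification of the limit $\nu_\infty = \exp(\E[\ln X])$ when $\E[\ln X]$ could in principle be $-\infty$ (i.e., $\nu_\infty = 0$); a monotone convergence argument on the logarithmic moment generating function handles this cleanly. One should also note at the outset that $\nu_1 = \E[1/g]<\infty$ is implicit in the problem formulation (otherwise the optimal $T{=}2$ scheduler of Section~\ref{sec:optimal_T2} is vacuous), and this together with the monotonicity from (a) ensures that every $\nu_m$ is finite, so no factor in the product defining $G_m$ blows up and the Ces\`aro step is valid.
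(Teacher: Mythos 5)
Your proposal is correct and follows essentially the same route as the paper: strict monotonicity of $\nu_m$ via a classical inequality with equality only in the degenerate case (you use Jensen/$L^p$-norm monotonicity where the paper applies H\"older directly, which is the same computation), identification of the limit as $e^{\E[\ln(1/g)]}$ via dominated convergence and L'Hospital, and part (b) by monotonicity plus the Ces\`aro mean theorem. Your added remarks on the case $\E[\ln(1/g)]=-\infty$ and on the finiteness of $\nu_1$ are sensible refinements but do not change the argument.
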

\begin{proof}
\begin{compactenum}[(a)]
\item First, we show the sequence $\{\nu_m\}$ is monotonically decreasing. Let $Y=1/g$ and $f_Y(y)$ be the pdf of $Y$. By the H\"older's inequality \cite{Rudin_Book87},
\begin{eqnarray*}
\E\left[Y^{\frac{1}{m+1}}\right] &=& \int_0^\infty y^{\frac{1}{m+1}} f_Y(y) dy \\
&=& \int_0^\infty \left(y^{\frac{1}{m}}f_Y(y) \right)^{\frac{m}{m+1}} \left(f_Y(y)\right)^{\frac{1}{m+1}} dy \\
&<& \left(\int_0^\infty y^{\frac{1}{m}}f_Y(y) dy\right)^{\frac{m}{m+1}} \left(\int_0^\infty f_Y(y) dy\right)^{\frac{1}{m+1}}
= \left(\E[Y^{\frac{1}{m}}]\right)^{\frac{m}{m+1}}.
\end{eqnarray*}
The strict inequality is due to the fact that $Y$ is not a point-mass density. Raising both sides to the power $(m+1)$ gives
$\nu_{m+1}< \nu_m$.

Second, we show convergence of the sequence. Let $\phi_m(y)=y^{\frac{1}{m}}$ for $y>0$ and $\psi(y)=1+y$ for $y>0$. Then, it is clear that $\lim_{m\to\infty}\phi_m(y)=1$ for all $y>0$, and $0<\phi_m(y) \le \psi(y)$ for all $y>0$. Additionally, $\int_0^\infty \psi(y) f_Y(y) dy <\infty$.
By the dominated convergence theorem \cite{Rudin_Book87}, we have
\begin{equation*}
\lim_{m\to\infty} \E [Y^{\frac{1}{m}}]=\lim_{m\to\infty}\int_0^\infty \phi_m(y) f_Y(y) dy=\int_0^\infty 1\cdot f_Y(y) dy =1.
\end{equation*}
Let $x$ be a positive real number. By the continuity of the logarithmic function, we have $\lim_{x\to 0}\ln \E[Y^x]=0$. By L'Hospital rule,
\begin{equation*}
\lim_{x\to 0} \frac{\ln\E[Y^x]}{x} = \lim_{x\to 0}\frac{\E[Y^x \ln Y]}{\E[Y^x]} = \E[\ln Y]
\end{equation*}
since $\lim_{x\to 0}\E[Y^x]=1$ and $\lim_{x\to 0}\E[Y^x \ln Y] = \E[\ln Y]$ (due to the dominated convergence theorem).
By the continuity of the exponential function, $\lim_{x\to 0}e^{\frac{1}{x}\ln \E[Y^x]}=e^{\E[\ln Y]}$. Since the above limit exists and $x$ is in the superset of integers, we have the result.

\item The monotonicity of the sequence $\{(\nu_m\nu_{m-1}\cdots\nu_1)^{1/m}\}$ follows immediately from the monotonicity of the sequence $\{\nu_m\}$ and its positivity.

By the property of the exponential function, we have $\left(\nu_m \nu_{m-1} \cdots \nu_1\right)^{\frac{1}{m}}=e^{\frac{1}{m}\ln(\nu_m \nu_{m-1} \cdots \nu_1)}=e^{\frac{1}{m}\sum_{n=1}^m \ln \nu_n}$.
Since $\lim_{m\to\infty} \nu_m = \nu_\infty$ and $\log$ is continuous, $\lim_{m\to\infty} \ln \nu_m = \ln \nu_\infty$. By Ces\'aro mean,
\begin{equation*}
\lim_{m\to\infty}\frac{1}{m}\sum_{n=1}^m \ln \nu_n = \ln \nu_\infty.
\end{equation*}
From the continuity of the exponential function, we have the result.
\end{compactenum}
\end{proof}
Notice that $\nu_1$ and $\nu_\infty$ represent the arithmetic mean and the geometric mean of random variable $1/g$, respectively. All other values in the sequence $\{\nu_m\}$ lie between these two means.

%%%%%%%%%%%%%%%%%%%%%%%%%%%%%%%%%%%%%%%%%%%%%%%%%%%%%%%%%%%%%%%%%%%%%
\section{Proof of Theorem \ref{thm:ratio_T2}}
\label{sec:pf_thm_ratio_T2}
%%%%%%%%%%%%%%%%%%%%%%%%%%%%%%%%%%%%%%%%%%%%%%%%%%%%%%%%%%%%%%%%%%%%%
For simple derivation, we work in units of nats rather than bits.
From \eqref{eq:b2_opt_T2}, the energy cost can be derived as
\begin{equation}
J_2^\text{opt}(g_2, B) = \begin{cases}
(e^B-1)\nu_1,& g_2 \le \frac{e^{-B}}{\nu_1},\\
2e^{\frac{B}{2}}\left(\frac{1}{g_2}\nu_1\right)^{1/2}-\frac{1}{g_2} -\nu_1,&  \frac{e^{-B}}{\nu_1}<g_2< \frac{e^{B}}{\nu_1}, \\
\frac{e^B-1}{g_2}, & g_2 \ge \frac{e^B}{\nu_1}.
\end{cases}
\end{equation}
Thus,
\begin{eqnarray}
\bar{J}_2^\text{opt}(B) &=& \E_{g_2}\left[J_2^\text{opt}(g_2, B)\right] \nonumber \\
&=& \int_0^{\frac{e^{-B}}{\nu_1}} (e^B-1)\nu_1 dF(x) + \int_{\frac{e^{-B}}{\nu_1}}^{\frac{e^{B}}{\nu_1}} \left[2 e^{\frac{B}{2}}\left(\frac{1}{x} \nu_1\right)^{1/2}-\frac{1}{x}-\nu_1\right] dF(x) \\
&&\qquad \qquad+\int_{\frac{e^{B}}{\nu_1}}^\infty \frac{e^B-1}{x} dF(x), \nonumber
\end{eqnarray}
where $F$ is the cumulative distribution function (CDF) of the channel state $g$.

By the limit rules, %$\bar{J}_2^\text{opt}(B)$ has asymptotically expressed as
\begin{eqnarray}
\lim_{B\to 0} \frac{\bar{J}_2^\text{opt}(B)}{e^B-1} &=& \lim_{B\to 0} \frac{\left\{\int_0^{\frac{e^{-B}}{\nu_1}} (e^B-1)\nu_1 dF(x) +\int_{\frac{e^{B}}{\nu_1}}^\infty \frac{e^B-1}{x} dF(x) \right\}}{e^B-1} \nonumber \\
&=& \lim_{B\to 0} \frac{\int_0^\infty (e^B-1) \min\left(\frac{1}{x},\nu_1 \right) dF(x)}{e^B-1} \nonumber \\
&=& \E\left[\min\left(\frac{1}{g},\nu_1 \right)\right]
\end{eqnarray}
and
\begin{equation}
\lim_{B\to 0}\frac{e^B-1}{2(e^{B/2}-1)}=1.
\end{equation}
With \eqref{eq:EJt_eq}, we obtain \eqref{eq:J_ratio_eq_opt_B_small}.
Likewise,
\begin{eqnarray}
\lim_{B\to \infty} \frac{\bar{J}_2^\text{opt}(B)}{2e^{\frac{B}{2}}(\nu_2\nu_1)^{1/2}} &=& \lim_{B\to\infty} \frac{\int_{\frac{e^{-B}}{\nu_1}}^{\frac{e^{B}}{\nu_1}} \left[2 e^{\frac{B}{2}}\left(\frac{1}{x} \nu_1\right)^{1/2}-\frac{1}{x}-\nu_1\right] dF(x)}{2e^{\frac{B}{2}}(\nu_2\nu_1)^{1/2}} \nonumber \\
&=& \lim_{B\to\infty} \frac{2 e^{\frac{B}{2}}\left(\nu_2 \nu_1\right)^{1/2}-2\nu_1}{2e^{\frac{B}{2}}(\nu_2\nu_1)^{1/2}}=1
\end{eqnarray}
and
\begin{equation}
\lim_{B\to \infty} \frac{\bar{J}_2^\text{eq}(B)}{2e^{\frac{B}{2}}\nu_1} =1
\end{equation}
Thus, we have shown \eqref{eq:J_ratio_eq_opt_B_large}.

%%%%%%%%%%%%%%%%%%%%%%%%%%%%%%%%%%%%%%%%%%%%%%%%%%%%%%%%%%%%%%%%%%%%%
\section{Derivation of \eqref{eq:stopping_omega_t}}
\label{sec:pf_stopping_omega_t}
%%%%%%%%%%%%%%%%%%%%%%%%%%%%%%%%%%%%%%%%%%%%%%%%%%%%%%%%%%%%%%%%%%%%%
From \eqref{eq:oneshot_dp} the threshold $\omega_t$ is related to the expected cost-to-go by $\omega_t = \frac{1}{2^{B}-1}\E[J_{t-1}(B)]$. The one-step cost-to-go is $\E[J_1(B)]=(2^B-1)\E\left[\frac{1}{g}\right]$ and therefore $\omega_2=\E\left[\frac{1}{g}\right]$. For $t>2$, we expand the cost-to-go in terms of $\omega_{t-1}$ to give:
\begin{eqnarray*}
\omega_t &=& \frac{1}{2^{B}-1}\E[J_{t-1}(B)]\\
&=& \frac{1}{2^{B}-1} \Bigg(\E\left[\frac{2^{B}-1}{g_{t-1}}\Bigg\vert \frac{1}{g_{t-1}}<\omega_{t-1} \right]\Pr\left\{\frac{1}{g_{t-1}}<\omega_{t-1}\right\}+\\ && \hspace{80pt} \E\left[ \E[J_{t-2}(B)]\Bigg\vert \frac{1}{g_{t-1}}<\omega_{t-1}\right]\Pr\left\{\frac{1}{g_{t-1}}\ge\omega_{t-1}\right\}\Bigg)
\end{eqnarray*}
By substituting $\E[J_{t-2}(B)]=(2^{B}-1)\omega_{t-1}$, we have the result.

\bibliographystyle{IEEEtran}
\bibliography{scheduling}

\begin{thebibliography}{10}
\providecommand{\url}[1]{#1}
\csname url@rmstyle\endcsname
\providecommand{\newblock}{\relax}
\providecommand{\bibinfo}[2]{#2}
\providecommand\BIBentrySTDinterwordspacing{\spaceskip=0pt\relax}
\providecommand\BIBentryALTinterwordstretchfactor{4}
\providecommand\BIBentryALTinterwordspacing{\spaceskip=\fontdimen2\font plus
\BIBentryALTinterwordstretchfactor\fontdimen3\font minus
  \fontdimen4\font\relax}
\providecommand\BIBforeignlanguage[2]{{%
\expandafter\ifx\csname l@#1\endcsname\relax
\typeout{** WARNING: IEEEtran.bst: No hyphenation pattern has been}%
\typeout{** loaded for the language `#1'. Using the pattern for}%
\typeout{** the default language instead.}%
\else
\language=\csname l@#1\endcsname
\fi
#2}}

\bibitem{Berry_IT02}
R.~A. Berry and R.~G. Gallager, ``Communication over fading channels with delay
  constraints,'' \emph{IEEE Trans. Inf. Theory}, vol.~48, no.~5, pp.
  1135--1149, May. 2002.

\bibitem{Rajan_IT04}
D.~Rajan, A.~Sabharwal, and B.~Aazhang, ``Delay-bounded packet scheduling of
  bursty traffic over wireless channels,'' \emph{IEEE Trans. Inf. Theory},
  vol.~50, no.~1, pp. 125--144, Jan. 2004.

\bibitem{Collins_Allerton99}
B.~E. Collins and R.~L. Cruz, ``Transmission policies for time varying channels
  with average delay constraints,'' in \emph{Proc. 1999 Allerton Conf. on
  Commun., Control, \& Comp.}, Monticello, IL, 1999.

\bibitem{Prabhakar_INFOCOM01}
B.~Prabhakar, E.~Uysal-Biyikoglu, and A.~E. Gamal, ``Energy-efficient
  transmission over a wireless link via lazy packet scheduling,'' in
  \emph{Proc. IEEE INFOCOM}, Anchorage, AK, Apr. 2001, pp. 386--394.

\bibitem{UysalBiyikoglu_IT04}
E.~Uysal-Biyikoglu and A.~E. Gamel, ``On adaptive transmission for energy
  efficient in wireless data networks,'' \emph{IEEE Trans. Inf. Theory},
  vol.~50, 2004.

\bibitem{Neely_IT07}
M.~J. Neely, ``Optimal energy and delay tradeoffs for multiuser wireless
  downlinks,'' \emph{IEEE Trans. Inf. Theory}, vol.~53, no.~9, pp. 3095--3113,
  Sep. 2007.

\bibitem{Chen_INFOCOM07}
W.~Chen, M.~J. Neely, and U.~Mitra, ``Energy efficient scheduling with
  individual packet delay constraints: Offline and online results,'' in
  \emph{Proc. IEEE INFOCOM}, Anchorage, AK, May 2007, pp. 1136--1144.

\bibitem{Chen_WiOpt07}
------, ``Energy efficient scheduling with individual delay constraints over a
  fading channel,'' in \emph{Proc. WiOpt}, 2007.

\bibitem{Fu_WC06}
A.~Fu, E.~Modiano, and J.~N. Tsitsiklis, ``Optimal transmission scheduling over
  a fading channel with energy and deadline constraints,'' \emph{IEEE Trans.
  Wireless Commun.}, vol.~5, no.~3, pp. 630--641, Mar. 2006.

\bibitem{Zafer_WITA07}
M.~Zafer and E.~Modiano, ``Delay constrained energy efficient data transmission
  over a wireless fading channel,'' in \emph{Workshop on Inf.~Theory and
  Appl.}, La Jolla, CA, Jan./Feb. 2007, pp. 289--298.

\bibitem{Negi_IT02}
R.~Negi and J.~M. Cioffi, ``Delay-constrained capacity with causal feedback,''
  \emph{IEEE Trans. Inf. Theory}, vol.~48, no.~9, pp. 2478--2494, Sep. 2002.

\bibitem{Bertsekas_DP1_Book05}
D.~P. Bertsekas, \emph{Dynamic Programming and Optimal Control}, 3rd~ed.\hskip
  1em plus 0.5em minus 0.4em\relax Mass.: Athena Scientific, 2005, vol.~1.

\bibitem{Boyd_Book04}
S.~Boyd and L.~Vandenberghe, \emph{Convex Optimization}.\hskip 1em plus 0.5em
  minus 0.4em\relax Cambridge, UK: Cambridge Univ. Press, 2004.

\bibitem{Rockafellar_Book70}
R.~T. Rockafellar, \emph{Convex Analysis}.\hskip 1em plus 0.5em minus
  0.4em\relax Princeton Univ. Press, 1970.

\bibitem{Bertsekas_AC75}
D.~P. Bertsekas, ``Convergence of discretization procedures in dynamic
  programming,'' \emph{IEEE Trans. Automat. Contr.}, vol. AC-20, no.~3, pp.
  415--419, Jun. 1975.

\bibitem{Rudin_Book87}
W.~Rudin, \emph{Real and Complex Analysis}, 3rd~ed.\hskip 1em plus 0.5em minus
  0.4em\relax McGraw-Hill, 1987.

\end{thebibliography}

\end{document}